\documentclass[11pt]{article}

\usepackage{fullpage}
\usepackage[utf8]{inputenc}
\usepackage[english]{babel}
\usepackage{amsmath,amssymb,amsthm,stackrel}
\usepackage{todonotes}
\usepackage{authblk}

\newtheorem{lemma}{Lemma}
\newtheorem{theorem}{Theorem}

\newcommand{\GH}{\textsc{Graph Homomorphism}} 
\newcommand{\SI}{\textsc{Subgraph Isomorphism}} 
\newcommand{\LGH}{\textsc{List Graph Homomorphism}} 
\newcommand{\cO}{\mathcal{O}}

\newcommand{\cL}{\mathcal{L}}

\renewcommand{\deg}{\operatorname{deg}}

\newcommand{\gr}[1]{\tilde{#1}} 

\title{Tight Bounds for Subgraph Isomorphism and  Graph Homomorphism\thanks{The research leading to these results has received funding from the Government of the Russian Federation (grant 14.Z50.31.0030). The research of Alexander Kulikov is also supported by  the grant of the President of Russian Federation  (MK-6550.2015.1).
}}
\author[1,4]{Fedor~V.~Fomin}
\author[2,4]{Alexander~Golovnev}
  \author[4]{Alexander~S.~Kulikov}
\author[3,4]{Ivan~Mihajlin}
 
  \affil[1]{University of Bergen, Norway}
    \affil[2]{New York University, USA}
      \affil[3]{University of California---San Diego, USA}
        \affil[4]{St.~Petersburg Department of Steklov Institute of Mathematics of the Russian Academy of Sciences, Russia}
\date{}
\sloppy 

\begin{document}
\maketitle
\begin{abstract}
We prove that unless Exponential Time Hypothesis (ETH) fails, deciding if there is a homomorphism from   graph $G$ to  graph $H$ cannot be done in time $|V(H)|^{o(|V(G)|)}$. Combined with the reduction of 
Cygan,   Pachocki,   and Soca{\l}a, our result rules out (subject to ETH) a possibility of $|V(G)|^{o(|V(G)|)}$-time algorithm deciding if graph $H$ is a subgraph of~$G$. 
For both problems our lower bounds asymptotically match the running time of brute-force algorithms trying all possible mappings of one graph into another. Thus, our work  closes  
 the   gap in the known complexity of these fundamental problems.
 \end{abstract}
\section{Introduction}\label{sec:intro}
A {\em homomorphism} $G\to H$ from an undirected graph $G$ to an undirected graph $H$ is a mapping
from the vertex set   $V(G)$ to  $V(H)$ such that the image of every edge of $G$ is an edge of~$H$. Then the \GH{} problem HOM$(G,H)$ is the problem to decide for given graphs $G$ and $H$, whether $G\to H$. \GH {}  is a generic problem and many fundamental combinatorial problems like \textsc{Graph Coloring} and \textsc{Clique} can be seen as its special cases. We refer to books of Hell and Ne\v set\v ril~\cite{HellN04}
and Lov{\'a}sz  \cite{lovasz2012large} for introduction to and applications of graph homomorphisms.

Solving  HOM$(G,H)$ can be done by checking all possible mappings from an $n$-vertex graph $G$ into an $h$-vertex graph~$H$.\footnote{In order to obtain general results, throughout the paper we assume implicitly that $h=h(n)$ is a function of~$n$. We assume that the function $h(n)$ is non-decreasing and time-constructible.}
The running time of this brute-force algorithm is  $\cO(h^n)=2^{\cO(n \log{h})}$. It was shown by Chen et al.~\cite{Chen20061346} that under the ETH assumption, for any constant $\varepsilon>0$, there is no $n^{o(k)}$-time algorithm checking whether a given $n$-vertex graph contains a $k$-clique for any $k=\cO(n^{1-\varepsilon})$. This implies, in particular, that \textsc{Graph Homomorphism} cannot be solved in time $2^{o({n\log{h}})}$ for $h$ significantly larger than~$n$ (again, under the ETH assumption). At the same time this does not exclude the existence of a faster algorithm for some $h \le n$. Moreover, one of the most natural special cases, the $h$-coloring problem (an~$n$ vertex graph can be colored in $h \le n$ colors if and only if there is a homomorphism from the graph to an $h$-clique; for this reason, HOM$(G,H)$ is often called $H$-coloring of~$G$), can be solved in time 
$2^n\operatorname{poly}(n)$ as shown by 
Bj\"{o}rklund et al.~\cite{BHKK2009}. That is why 
the existence of 
an algorithm solving \GH{} asymptotically  faster than the brute-force   was a major open problem in the area of Exact Exponential Algorithms \cite{FHK2007,Rzazewski14,Wahlst10,W2011}. 
In~\cite{FGKM2015}, we have shown that unless ETH fails, there is no algorithm solving \GH{} in time $2^{o\left(\frac{n\log{h}}{\log\log{h}}\right)}$ for every function $h(n)$. In this paper, we close the gap between the existing lower and upper bounds by ruling out a possibility of solving \GH{} in time $2^{o({n\log{h}})}$ for every function $h(n)$.

Our result also implies a tight bound for  the related \SI{} problem. Here, for two given $n$-vertex graphs $G$ and $H$,  the task is to decide if $G$ contains a subgraph isomorphic to~$H$. 
As \GH, \SI{} encompasses many fundamental problems including \textsc{Hamiltonian Cycle}, \textsc{Bandwidth},  \textsc{Triangle Packing},  \textsc{Clique}, and \textsc{Biclique}. 
Again,  a brute-force algorithm solves \SI{}  in time $2^{\cO(n \log{n})}$ and a possibility of time $2^{o(n \log{n})}$
solving \SI{} was another long-standing open question in the area, see for example \cite{AminiFS12,CPS2015,Fomin10}, and  \cite[Chapter 12]{FominKratschbook10}. 
 Recently, 
Cygan,   Pachocki,   and Soca{\l}a~\cite{CPS2015} showed that \GH{} can be solved by solving $2^{\cO(n)}$ instances of \SI. This reduction, together with~\cite{FGKM2015}, implied that \SI{} cannot be solved in time $2^{o\left(\frac{n \log{n}}{\log\log{n}}\right)}$ unless ETH fails~\cite{CPS2015}. Combined with this reduction, our lower bound for homomorphisms rules out algorithms of running time   
 $2^{o(n \log{n})}$ for \SI {}, and closes the gap between upper and lower bounds for this problem as well.

We build the proof of our tight lower bounds for graph homomorphisms  on~\cite{FGKM2015}. As in~\cite{FGKM2015}, we  obtain lower bounds for   \textsc{Graph Homomorphism} by reducing the $3$-coloring problem on graphs of bounded degree to it. The crucial difference with~\cite{FGKM2015}, which allows us to obtain a tight bound, is that we reduce $3$-coloring of graphs of degree $d$ on $n$ vertices to list homomorphism of $\frac{n}{r}$-vertex graph to  $\gamma(d)^r$-vertex graph, where $\gamma(d)$ is a function that depends on $d$ only. Then an  $h^{o(n)}$ upper bound for $n$-vertex to $h$-vertex graph homomorphism would imply a subexponential $\gamma(4)^{o(n)}$ algorithm for 3-coloring on graphs of degree $4$, contradicting ETH. 

\section{Preliminaries}
\subsection{Main Definitions}
Let $G$ be a graph, by $V(G)$ and $E(G)$ we denote the sets of vertices and edges of $G$, respectively. For a vertex $v\in V(G)$, by the neighborhood $N_G(v)$ we mean the set of all vertices of $G$ adjacent to $v$. By the square of $G$ we denote the graph $G^2$, such that $V(G^2)=V(G)$, and $\{u,v\}\in E(G^2)$ if and only if there is a path of length at most two from $u$ to $v$ in $G$.

Let $G$ be an $n$-vertex graph, $1 \le k \le n$ be an integer, and $V(G)=B_1 \sqcup B_2 \sqcup \ldots \sqcup B_k$ be a partition of the set of vertices of~$G$.
Then a {\em grouping} of $G$ with respect to the partition $V(G)=B_1 \sqcup B_2 \sqcup \ldots \sqcup B_k$ is a graph $\gr{G}$
with vertices $B_1, \ldots, B_k$ such that $B_i$ and $B_j$
are adjacent in $\gr{G}$ if and only if there exist $u \in B_i$ and $v \in B_j$ such that $\{u,v\} \in E(G)$. To distinguish vertices of the graphs $G$ and $\gr{G}$, the vertices of $\gr{G}$ will be called {\em buckets}. 

A {\em proper coloring} of a graph $G$ is a function assigning a color to each vertex of $G$ such that adjacent vertices have different colors. 
An {\em equitable coloring} is a proper coloring where the numbers of vertices of any two colors differ by at most one.
An {\em injective coloring} is a proper coloring that assigns different colors to any pair of vertices that have a common neighbor (note that a proper coloring of the square of a graph $G$ is an injective coloring of $G$).

For a positive integer $k$,  we use $[k]$ to denote the set of integers $\{1,\dots, k\}$. All logarithms in this paper are logarithms to the base two.

\subsection{Homomorphism and Subgraph Isomorphism}
 Let $G$ and $H$ be  graphs.  A mapping 
 $\varphi : V(G)\to V(H)$ is a \emph{homomorphism} if for every edge $\{u,v\}\in E(G)$ its image $\{\varphi(u),\varphi(v)\}\in E(H)$.
 If there exists a homomorphism from $G$ to $H$, we write $G\to H$.
  The \GH{} problem  HOM$(G,H)$
  asks whether or not  $G\to H$. We also use the following generalization of graph homomorphism. Assume that for each vertex $v$ of   $G$  there is an assigned  list $\cL(v) \subseteq V (H)$ of vertices. A \emph{list homomorphism} of $G$ to $H$, also known as  a list $H$-colouring of $G$, with respect to the lists $\cL$, is a homomorphism  $\varphi : V(G)\to V(H)$, such that $\varphi(v) \in \cL(v)$ for all $v\in V (G)$. Then the  \LGH{} problem LIST-HOM$(G,H)$ asks whether or not  graph $G$ with lists $\cL$ admits a list homomorphism to $H $ with respect to $\cL$.
  
In the \SI{} problem one is given two graphs $G$ and $H$ and the question is whether $G$ contains a subgraph isomorphic to~$H$.
 
\subsection{Exponential Time Hypothesis} 
Our lower bound is  based on the well-known complexity hypothesis of  Impagliazzo, Paturi, and Zane   \cite{ImpagliazzoPZ01}, see \cite{CFKLMPPS2014,LMS2013} for an overview of  the hypothesis and its implications. 
 
\begin{quote}
\textbf{Exponential Time Hypothesis (ETH)}:  There is a constant $s>0$ such that 3-CNF-SAT with $n$ variables and $m$ clauses cannot be solved in time $2^{sn}(n+m)^{\cO(1)}$.
\end{quote}

 
 Let us remind that in the  \textsc{$3$-Coloring} problem the task is to decide whether a given graph admits a proper coloring in three colors. 
We will need the following folklore lemma. It follows from the fact that  
(unless ETH fails)   \textsc{$3$-Coloring} on  graphs of average degree four cannot be solved in subexponential time, see e.g. Theorem~$3.2$ in~\cite{LMS2013},
and the classical reduction, see e.g. \cite{GareyJ79}, for  \textsc{$3$-Coloring} on degree-four graphs.

\begin{lemma}\label{lemma:3col}
Unless ETH fails, there exists a constant $q>0$  such that there is no algorithm solving \textsc{$3$-Coloring} on $n$-vertex  graphs of maximum degree four in time $\cO\left(2^{q n} \right)$.
\end{lemma}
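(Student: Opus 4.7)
The plan is to chain two reductions, each of which blows up the number of vertices by only a constant factor, so that a subexponential lower bound is preserved from ETH all the way to max-degree-$4$ 3-Coloring.

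\emph{Step 1 (from ETH to 3-Coloring on bounded average degree).} I would invoke the Sparsification Lemma of Impagliazzo--Paturi--Zane to reduce ETH to a $2^{o(n)}$ lower bound for 3-CNF-SAT with $n$ variables and only $\cO(n)$ clauses. The textbook reduction from 3-SAT to 3-Coloring (variable, clause, and OR-gadgets) then produces a graph on $\cO(n+m)=\cO(n)$ vertices and edges, hence of bounded average degree. This is the content of Theorem~3.2 in~\cite{LMS2013}: under ETH there is no $2^{o(n)}$ algorithm for 3-Coloring on $n$-vertex graphs of average degree at most four.

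\emph{Step 2 (from average degree $4$ to maximum degree $4$).} Given a graph $G$ of average degree $4$ on $n$ vertices and $m=2n$ edges, I would apply the classical degree-reduction of~\cite{GareyJ79}. Replace each vertex $v$ of degree $d\ge 5$ by a collection of $d$ copies $v_1,\ldots,v_d$, each inheriting one of the edges originally incident to $v$; the copies are tied together by a chain of small equality gadgets that force them to share a common color in every proper 3-coloring while keeping the maximum degree at most~$4$. Since each gadget uses $\cO(1)$ auxiliary vertices per equalized pair, the replacement at $v$ adds $\cO(d)$ vertices, and the total blowup is $\cO\bigl(\sum_v \deg(v)\bigr)=\cO(m)=\cO(n)$. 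Thus the reduced graph $G'$ has $n'\le Cn$ vertices for an absolute constant $C$ and is 3-colorable if and only if $G$ is.

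Combining the two steps, a hypothetical $2^{qn'}$-time algorithm for 3-Coloring on max-degree-$4$ graphs would yield a $2^{qCn}$-time algorithm for the average-degree-$4$ case, contradicting Step~1 once $q$ is chosen smaller than the ETH constant divided by $C$.

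\emph{Main obstacle.} The substantive technical point is the equality gadget used in Step~2: one needs a constant-size graph with two distinguished boundary vertices $a$ and $b$ such that (i) every proper 3-coloring assigns $a$ and $b$ the same color, (ii) all three colors are realizable at $a$, and (iii) the maximum degree is at most $4$ throughout, with degree at most $3$ at $a$ and $b$ so that each boundary vertex can still absorb one extra edge to an original neighbor. Such gadgets are standard but verifying correctness requires a small case analysis; crucially none of the gadget's parameters depend on $n$, which is exactly what guarantees the linear blowup and hence the transfer of the $2^{o(n)}$ lower bound.
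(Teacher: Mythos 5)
Your proposal is correct and takes essentially the same route as the paper: the paper itself proves Lemma~\ref{lemma:3col} only by citation, pointing to Theorem~3.2 of the Lokshtanov--Marx--Saurabh survey~\cite{LMS2013} for the average-degree-four lower bound (your Step~1: sparsification plus the textbook 3-SAT-to-3-Coloring reduction) and to the Garey--Johnson degree-reduction gadget~\cite{GareyJ79} for passing to maximum degree four (your Step~2). You are simply fleshing out exactly the two-step argument the paper gestures at, including the correct observation that the replacement cost per high-degree vertex is linear in its degree so the total blowup stays $\cO(n)$.
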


%

%
%
%


\section{Auxiliary Lemmata}

In this section we provide auxiliary lemmata about colorings which will be used to prove lower bounds for \GH{} and  \SI.
\subsection{Balanced Colorings}
In the following we show how to construct a specific ``balanced" coloring of a graph
in polynomial time. Let $G$ be a graph of constant maximum degree. The coloring of $G$ we want to construct should satisfy three properties. First, it should be a proper coloring of~$G^2$. Then the size of each color class should be bounded as well as the number of edges between vertices from different color classes. More precisiely. 

\begin{lemma}\label{lemma:coloring}
For any constant $d$, there exist constants $\alpha, \beta, \tau>1$ and a polynomial time algorithm
that for a given graph $G$ on $n$ vertices of maximum degree $d$ and an 
integer $\tau \le L \le \frac{n(d^2-1)}{2d^2(d^2+1)}$, finds a coloring $c \colon V(G) \to [L]$ satisfying the following properties:
\begin{enumerate}
\item The coloring $c$ is a proper coloring of~$G^2$.
\item There are only a few vertices of each color: 
for all  $i \in [L]$,
\begin{equation}\label{eq:balver}
|c^{-1}(i)| \le \left\lceil\alpha \cdot \frac{n}{L}\right\rceil \, .
\end{equation}
\item There are only a few  edges of $G$  between each pair of colors:
For all $i \neq j \in [L]$, we have  
\begin{multline*}\label{eq:fg}
k_{i,j} := |\{\{u,v\} \in E (G) \colon c(u)=i, c(v)=j\}| \le 
K_{i,j} := \left\lceil\beta \cdot \frac{\min\{|c^{-1}(i)|, |c^{-1}(j)|\}}{L}\right\rceil \, .
\end{multline*}
\end{enumerate}
\end{lemma}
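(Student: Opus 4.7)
The key idea is to build the coloring $c$ in two stages: first obtain a proper coloring of $G^{2}$ with only $d^{2}+1$ colors, and then ``refine'' it into $L$ colors while preserving properness and enforcing the two balance constraints. For the first stage I would apply a polynomial-time version of the Hajnal--Szemer\'edi equitable coloring theorem (Kierstead--Kostochka--Mydlarz--Szemer\'edi) to $G^{2}$, which has maximum degree at most $d^{2}$. This yields independent sets $I_{1},\dots,I_{d^{2}+1}$ of $G^{2}$, each of size $\lfloor n/(d^{2}+1)\rfloor$ or $\lceil n/(d^{2}+1)\rceil$.

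For the refinement stage I would split $[L]$ into $d^{2}+1$ disjoint subranges $\Lambda_{1},\dots,\Lambda_{d^{2}+1}$ of sizes proportional to $|I_{j}|$ (hence each $|\Lambda_{j}|\in\{\lfloor L/(d^{2}+1)\rfloor,\lceil L/(d^{2}+1)\rceil\}$) and then, within each $I_{j}$, assign every vertex an independent uniformly random label from $\Lambda_{j}$. Because the subranges are disjoint and each $I_{j}$ is already independent in $G^{2}$, property~1 is automatic. The size $|c^{-1}(i)|$ for $i\in\Lambda_{j}$ is a sum of $|I_{j}|$ i.i.d.\ $\mathrm{Bernoulli}(1/|\Lambda_{j}|)$ indicators with mean at most $n/L\cdot(1+o_{d}(1))$, so a Chernoff bound gives $|c^{-1}(i)|\le\lceil\alpha n/L\rceil$ with failure probability exponentially small in $n/L$, for any fixed $\alpha>2$. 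For property~3, when $i\in\Lambda_{a}$ and $j\in\Lambda_{b}$ with $a\ne b$, the quantity $k_{i,j}$ is a sum of indicators over the at most $d|I_{a}|$ edges of $E(G)$ between $I_{a}$ and $I_{b}$; its expectation is $O(n/L^{2})$, so for a large enough $\beta=\beta(d)$ the event $\{k_{i,j}>K_{i,j}\}$ is unlikely.

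To combine these bounds into one simultaneous guarantee I would apply the Lov\'asz Local Lemma, using the fact that every bad event (either a size overflow $B_{i}$ or an edge overflow $A_{i,j}$) depends only on the labels of vertices in at most two Hajnal--Szemer\'edi classes. Hence the dependency degrees are polynomial in $d$, while the failure probabilities are controlled by Chernoff. The constant $\tau$ is then chosen large enough that, in the entire allowed range $\tau\le L\le n(d^{2}-1)/(2d^{2}(d^{2}+1))$, the LLL hypothesis is satisfied with constants depending only on $d$. Invoking the Moser--Tardos algorithm, and derandomizing via conditional expectations on the (polynomial-size) space of vertex labels, finally turns this into a deterministic polynomial-time procedure.

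\textbf{The main obstacle} I anticipate is the regime in which $L$ is close to its upper bound: there $n/L^{2}=O(1/n)$, so the expectation $\mathbb{E}\,k_{i,j}$ is itself much smaller than $1$ and the ceiling forces $K_{i,j}=1$, meaning we must forbid \emph{every} pair of colors from having two or more edges between them. In this regime a naive union bound over the $\Theta(L^{2})$ pairs fails, and the whole argument hinges on a careful LLL bookkeeping -- verifying that the near-locality of the random labelling really does give dependency degree and tail probabilities that, together, admit a valid choice of $\alpha,\beta,\tau$ depending only on $d$.
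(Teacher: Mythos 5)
Your approach (Hajnal--Szemer\'edi decomposition, random refinement, Chernoff, Lov\'asz Local Lemma, Moser--Tardos) is genuinely different from the paper's, which is a direct deterministic greedy argument: the paper first finds a large independent set $I$ of $G^2$, equitably precolors $I$ with all $L$ colors so that every color class is already non-empty (in fact of size at least $n/(2Ld^2)$), and then extends the coloring one vertex at a time, showing by a Markov-style count that the number of colors forbidden by the three constraints is at most $d^2 + L/\alpha + 4\alpha L d^4/\beta < L-1$, so a vacant color always exists. The crucial role of the precoloring is precisely to establish the lower bound on class sizes that makes the counting for property~3 go through.

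There is, however, a genuine gap in your LLL step, and it sits exactly in the regime you flag as the main obstacle. You assert that each bad event ``depends only on the labels of vertices in at most two Hajnal--Szemer\'edi classes'' and conclude that ``the dependency degrees are polynomial in $d$.'' That inference is wrong. The size event $B_i$ for $i\in\Lambda_a$ depends on the labels of \emph{all} vertices in $I_a$ (about $n/(d^2+1)$ variables), and every other $B_{i'}$ with $i'\in\Lambda_a$ depends on the same variables; so $B_i$ is mutually dependent with roughly $|\Lambda_a|\approx L/(d^2+1)$ other size events, and the edge events $A_{i,j}$ have dependency degree $\Theta(L^2/(d^2+1))$. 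When $L=\Theta(n)$ (near the allowed upper bound), $n/L$ is a constant, so the Chernoff tail for $B_i$ is only a constant $p_0(d,\alpha)$, while $D=\Theta(n)$; the LLL condition $e p (D+1)\le 1$ fails by a factor of $\Theta(n)$. The same happens for the edge events once $K_{i,j}$ is forced to $1$: you need $k_{i,j}\le 1$ for $\Theta(L^2)$ pairs, the per-event probability is at best $\Theta(1/n^2)$, but there is no local independence structure of bounded degree. To repair this you would need either a much finer dependency bookkeeping (e.g.\ exploiting negative correlation among the $B_i$ within one $\Lambda_a$, which you do not argue), or an entirely different randomization; as written, the proof does not close. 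The paper sidesteps this difficulty altogether by never arguing globally at all: the precoloring hands each color class a guaranteed $\Omega(n/(Ld^2))$ vertices, and then each remaining vertex is placed greedily, so only a local (per-vertex) count is ever needed.
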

\begin{proof}
The algorithm starts by constructing greedily an independent set $I$ of $G^2$ of size $\left\lceil\frac{n}{d^2+1}\right\rceil$. 
Since the maximum vertex degree of $G^2$ does not exceed~$d^2$, this is always possible. We construct a partial coloring of $G^2$   by  coloring   the vertices of $I$ in $L$ colors  such that the obtained coloring is an equitable coloring of $G^2[I]$.  Since $I$ is an independent set in $G^2$, such a coloring can be easily constructed in polynomial time.
In the obtained partial equitable coloring, we have that for every $i\in [L]$  
\begin{equation}\label{eq:gh}
|c^{-1}(i)| \ge \left\lfloor\frac{n}{L(d^2+1)}\right\rfloor \ge \frac{n}{2Ld^2} 
\end{equation}
(recall that $L \le \frac{n(d^2-1)}{2d^2(d^2+1)}$).
Let us note that the obtained precoloring of $G^2$ clearly satisfies the   first and the third conditions of the lemma. Since 
  the size of every $c^{-1}(i)$, $i\in [L]$, does not exceed $|c^{-1}(i)| \le \left\lceil  \frac{n}{L}\right\rceil$, the second condition of the lemma also holds for every  $\alpha>1$.

  We extend the precoloring of $G^2$ to the required coloring by the following greedy procedure: We select an uncolored vertex $v$ and   color it by a color from  $[L]$ such that the new partial coloring also satisfies the three conditions of the lemma. In what follows, we prove that such a greedy choice of a color is always possible.

Coloring of a  vertex $v$ with a color $i$  can be {forbidden} only because it breaks one of the three conditions. Let us count, how many colors can be  forbidden for $v$ by each of the three constraints.
\begin{enumerate}
\item Vertex $v$ has at most $d^2$ neighbors in $G^2$, so the first constraint forbids at most $d^2$ colors.
\item The second constraint forbids all the colors that are ``fully packed'' already. The number of such colors is at most $\frac{n}{\left(\frac{\alpha n}{L}\right)}=\frac L\alpha$.
\item To estimate the number of colors forbidden by the third condition,  we go through all the neighbors of~$v$. A neighbor $u \in N_G(v)$ forbids a color $i$
if coloring $v$ by  $i$ exceeds the allowed bound on $k_{i,c(u)}$.
Hence to estimate the number of such forbidden colors $i$ (for every fixed vertex $u$)
we need to estimate how many  values of $k_{i,c(u)}$ can reach the allowed upper bound $K_{i,c(u)}$. We have that 
\begin{align*}
|\{i \colon k_{i,c(u)}  &=  K_{i,c(u)}\}|   \stackrel{\text{by~\eqref{eq:gh}}}{\le}  
\left| \left\{ 
i \colon k_{i,c(u)} \ge \frac{\beta n}{2L^2d^2} 
\right\} \right| = \left| \left\{ 
i \colon k_{i,c(u)} \cdot\frac{2L^2d^2}{\beta n} \ge 1
\right\} \right| & \\  &\leq  \sum_{i \in [L]}k_{i,c(u)}\cdot \frac{2L^2d^2}{\beta n}
. &
\end{align*}
The number of edges between vertices of the same color $c(u)$ and all other vertices of the graph does not exceed the cardinality of the color class $c(u)$ times $d$. Thus we have 

\begin{align*}
 \sum_{i \in [L]}k_{i,c(u)}\cdot \frac{2L^2d^2}{\beta n}
 &
 \le  d|c^{-1}(c(u))|\cdot \frac{2L^2d^2}{\beta n}    \stackrel{\text{by~\eqref{eq:balver}}}{\le} d\left\lceil\frac{\alpha n}{L}\right\rceil\cdot \frac{2L^2d^2}{\beta n} \\ &\le 
d\frac{2\alpha n}{L}\cdot \frac{2L^2d^2}{\beta n}  = 
\frac{4\alpha Ld^3}{\beta} \, .&
\end{align*}

Therefore, 
\begin{align*}
|\{i \colon k_{i,c(u)}  &=  K_{i,c(u)}\}|  \leq \frac{4\alpha Ld^3}{\beta} \, . 
\end{align*}

Since the degree of $v$ in $G$ does not exceed $d$, we have that the  number of colors forbidden by the third constraint is at most $\frac{4\alpha Ld^4}{\beta}$.
\end{enumerate}
Thus, the total number of colors forbidden by all the three constraints for the vertex $v$ is at most
\[d^2 + \frac{L}{\alpha} + \frac{4\alpha L d^4}{\beta} \, .\]
By taking sufficiently large constants 
$\alpha$, $\beta$, and $\tau$, say $\alpha=4$, $\beta=16\alpha^2 d^4$,  and $\tau = 2d^2 +1$, we guarantee that this expression is less than $L-1$
for every $L \ge \tau$. Therefore, there always exists a vacant color for the vertex $v$ which concludes the proof.
\end{proof}

Now with a help of Lemma~\ref{lemma:coloring}, we describe a way to construct a specific grouping of a graph. The properties of such groupings are crucial for the final reduction. 
%

\begin{lemma}\label{lem:group}
For any constant $d$, there exists a constant $\lambda=\lambda(d)$ and a polynomial time algorithm
that for a given graph $G$ on $n$ vertices of maximum degree $d$ and an integer $r\leq \sqrt{\frac{n}{2\lambda}}$, 
 finds a grouping $\gr{G}$ of $G$ and a coloring $\tilde{c} \colon V(\gr{G}) \to [\lambda r]$ such that
\begin{enumerate}
\item The number of buckets of  $\gr{G}$ is 
\[|V(\gr{G})| \le \frac{|V(G)|}{r} \,;\]
\item The coloring $\tilde{c}$ is a proper coloring of $\gr{G}^2$;
\item Each bucket $B \in V(\gr{G})$ is an independent set in~$G$, i.e. for every $u,v\in B$, $\{u,v\} \not\in E(G)$;
\item For every pair of  buckets $B_1,B_2 \in V(\gr{G})$ there is at most one edge between them in~$G$, i.e.
\[|\{\{u,v\} \in E(G) \colon u \in B_1, v \in B_2\}| \le 1 \, .\]
\end{enumerate}
\end{lemma}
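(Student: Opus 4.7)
The plan is to apply Lemma~\ref{lemma:coloring} in two stages: first to $G$ itself, to produce the grouping $\tilde{G}$, and then to the resulting bucket graph, to produce the coloring $\tilde{c}$. I will fix a constant $\lambda = \lambda(d)$ that is large enough in terms of $d$ and the constants $\alpha,\beta,\tau$ supplied by Lemma~\ref{lemma:coloring}.

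First, I would apply Lemma~\ref{lemma:coloring} to $G$ with parameter $L = \lceil n/r \rceil$. The hypothesis $r \le \sqrt{n/(2\lambda)}$ places $L$ in the valid range of the lemma and, for $\lambda$ sufficiently large, collapses the third conclusion $K_{i,j}$ to at most one: indeed, the bound $K_{i,j} \le \lceil \beta (\alpha r + 1) r / n \rceil$ is dominated by $\beta \alpha r^2 / n \le \beta\alpha/(2\lambda)$, which is strictly less than $1$ for $\lambda \ge 2\alpha\beta$. Declaring each color class to be a bucket of $\tilde{G}$ then yields properties 1, 3, and 4 essentially for free: there are at most $\lceil n/r\rceil$ buckets, each bucket is independent in $G^2$ (hence in $G$), and at most one edge of $G$ connects any pair of buckets.

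Second, I would apply Lemma~\ref{lemma:coloring} to $\tilde{G}$ with parameter $L' = \lambda r$. Here $\tilde{G}$ has at most $n/r$ vertices, and its maximum degree is $\tilde{d} \le d(\alpha r + 1) = \cO(dr)$: each of the at most $d$ neighbors in $G$ of a vertex of a bucket must lie, by property 4, in a distinct bucket, so the degree of a bucket $B$ in $\tilde{G}$ is at most $d|B|$. The proper $\lambda r$-coloring of $\tilde{G}^2$ output by the lemma would then be the desired coloring $\tilde{c}$, giving property 2.

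The hard part will be this second application: the lower bound $\tau(\tilde{d}) = 2\tilde{d}^2 + 1 = \Theta(d^2 r^2)$ demanded by Lemma~\ref{lemma:coloring} exceeds $\lambda r$ whenever $r$ is large compared to $\lambda/d^2$. I expect this to be overcome by a more coordinated choice of parameters between the two stages---either by refining the grouping in stage~1 so that the bucket graph has effective maximum degree closer to $\cO(d)$ rather than $\cO(dr)$, or by replacing stage~2 with a tailored coloring procedure that exploits the structure of $\tilde{G}$ inherited from stage~1 (in particular, that $\tilde{G}$ has only $\cO(n)$ edges total and that each bucket is monochromatic in the initial proper $G^2$-coloring).
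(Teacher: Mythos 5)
Your stage~1 is workable as far as it goes, and the degree estimate and the derivation that $K_{i,j}\le 1$ are correct. But you have correctly identified that stage~2 fails, and the gap there is not a technicality that tighter bookkeeping can patch: Lemma~\ref{lemma:coloring} is stated for graphs of \emph{constant} maximum degree, with $\alpha,\beta,\tau$ allowed to depend on that degree, whereas $\tilde{G}$ has maximum degree $\Theta(dr)$, which grows with $n$. Even if the lemma were made uniform in the degree, its lower threshold $\tau$ scales like the square of the degree, i.e.\ $\Theta(d^2r^2)$, which is asymptotically larger than the target $\lambda r$. Since you aim to color $\tilde{G}^2$ with only $\lambda r$ colors and $\tilde{G}^2$ itself already has maximum degree $\Omega(d^2r^2)$, no greedy argument on $\tilde{G}^2$ alone can succeed; the structure linking buckets to the coloring has to come from somewhere else.

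The paper resolves this by inverting your parameter choice: it applies Lemma~\ref{lemma:coloring} \emph{once}, with the small value $L=\lambda r$ (not $L=\lceil n/r\rceil$), so that each color class $c^{-1}(i)$ is large ($\approx n/(\lambda r)$ vertices). It then introduces, for each color $i$, an auxiliary graph $F_i$ on $c^{-1}(i)$ in which $u$ and $v$ are adjacent whenever some neighbor of $u$ and some neighbor of $v$ receive the same color. Using the $K_{i,j}$ bound from Lemma~\ref{lemma:coloring} one shows $\deg_{F_i}(v)\le |V(F_i)|/(2r)$, so a greedy coloring splits $F_i$ into at most $|V(F_i)|/(2r)+1$ independent sets; each such independent set becomes a bucket. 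Because every bucket is monochromatic under $c$, the bucket coloring $\tilde{c}$ is obtained for free by inheriting the color from $c$, and it has exactly $\lambda r$ colors. Properness of $\tilde{c}$ on $\tilde{G}^2$ then follows from the two facts that $c$ is proper on $G^2$ and that no two vertices in the same bucket have neighbors of a common color (the defining property of an independent set in $F_i$). This is precisely the ``tailored coloring procedure that exploits the structure of $\tilde{G}$'' you conjectured might exist, but it is built into the construction of the buckets themselves rather than applied after the fact to $\tilde{G}$.
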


\begin{proof} Let $\beta=\beta(d)$ be a constant provided by Lemma~\ref{lemma:coloring} and let $L=\lambda r$ for $\lambda=\lambda(d)=2d\beta$.  Let also $c$ be a coloring of $G$ in $L$ colors provided by Lemma~\ref{lemma:coloring}. We want to construct a grouping $\gr{G}$ of $G$ such that for all buckets $B \in V(\gr{G})$ and all $u \neq v \in B$,
\begin{equation}\label{eq:rz}
c(u)=c(v) \text{ and } c(u') \neq c(v') \text{ for all } u' \in N_G(u), v' \in N_G(v)
\end{equation}
In other words, all  vertices of the same bucket are of  the same color while any two neighbors of such two vertices are of different colors.

For each color $i\in[L]$,  we introduce an auxiliary constraint graph $F_i$. The vertex set of $F_i$ is  $V(F_i)=c^{-1}(i)$ and its edge set is 
\begin{equation}
E(F_i) = \{\{u,v\} \colon \exists u'\in N_G(u),v' \in N_G(v), c(u')=c(v') \}. 
\end{equation}
In our construction, each bucket of $\tilde{G}$ will be an independent set in some $F_i$. Note that this will immediately imply~(\ref{eq:rz}). The degree of any vertex $v\in V(F_i)$ is at most
\begin{equation*}
\deg_{F_i}(v) \leq \sum_{v' \in N_G(v)} (K_{c(v),c(v')}-1) \leq d \left(\left\lceil\frac{\beta |c^{-1}(v)|}{L}\right\rceil-1\right) \leq \frac{d \beta |V(F_i)|}{L} = \frac{|V(F_i)|}{2 r}\,.
\end{equation*}
This means that the greedy algorithm finds a proper coloring of each $F_i$ in at most $\frac{|V(F_i)|}{2r}+1$ colors, which splits each $F_i$ in at most $\frac{|V(F_i)|}{2r}+1$ independent sets. We create a separate bucket of $\gr{G}$ from each independent set of each $F_i$. Now we show that the four conditions from the lemma statement hold.
\begin{enumerate}
\item  For the first property, the number of independent sets in each $F_i$ is at most $\frac{|V(F_i)|}{2r}+1$. Thus the number of buckets in $\gr{G}$ is 
\[ |V(\gr{G})|\leq \sum_{i\in[L]} \left(\frac{|V(F_i)|}{2r}+1\right) = \sum_{i\in[L]} \left(\frac{|c^{-1}(i)|}{2r}+1\right) = \frac{n}{2r}+L\le\frac{n}{r} \,,\]
since $L=\lambda r$ and $2\lambda r^2 \le n$.
\item For the second property, by Lemma~\ref{lemma:coloring}, the coloring $c$ is  proper in~$G^2$.
We can convert $c$ to a coloring $\tilde{c} \colon V(\gr{G}) \to [\lambda r]$ by assigning each bucket the color of its vertices (all of them have the same color). The resulting coloring $\tilde{c}$ is a proper coloring of $\gr{G}^2$ by \eqref{eq:rz}.
\item All buckets of $\gr{G}$ are monochromatic with respect to $c$, thus, each bucket $B \in V(\gr{G})$ is an independent set in~$G$ and the third property holds.
\item Finally, by (\ref{eq:rz}), there is at most one edge in $G$ between vertices corresponding to any pair of buckets in~$\gr{G}$.
\end{enumerate}
Thus, the constructed grouping and its coloring satisfy all conditions of the lemma.
\end{proof}

\begin{lemma}\label{lem:coltolhom}
There exists a polynomial time algorithm that takes an input a graph $G$ on $n$ vertices of maximum degree $d$ 
that needs to be $3$-colored and an integer $r=o(\sqrt{n})$ and finds an equisatisfiable instance $(G',H')$ of \LGH, where $|V(G')|\le\frac{n}{r},|V(H')|\le\gamma(d)^r$.
\end{lemma}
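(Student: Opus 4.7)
The plan is to apply Lemma~\ref{lem:group} to $G$ and then encode the $3$-coloring problem as a list-homomorphism instance whose source graph has one vertex per bucket and whose target graph represents all local $3$-colorings of a single bucket.

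First, I would invoke Lemma~\ref{lem:group} with parameter $r$ to obtain a grouping $\tilde{G}$ satisfying its four conclusions. A small strengthening of the lemma's proof---using an equitable (rather than greedy) coloring of each auxiliary graph $F_i$, which is polynomial-time constructible for bounded-degree graphs---additionally guarantees that every bucket of $\tilde{G}$ has size at most $s:=s(d)\cdot r$ for some constant $s(d)$ depending only on $d$. Fix an arbitrary ordering of the vertices inside every bucket.

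Next, I would identify $V(G')$ with the buckets of $\tilde{G}$, so that $|V(G')|\le n/r$ is immediate from property~1 of Lemma~\ref{lem:group}. Since each bucket is an independent set of $G$ and any two buckets are joined by at most one edge of $G$, a proper $3$-coloring of $G$ corresponds exactly to a choice, for every bucket $B$, of an internal coloring (among at most $3^s$) subject to the constraint that for every edge $\{B_1,B_2\}$ of $\tilde{G}$ the chosen colorings disagree on the endpoints of the unique $G$-edge between $B_1$ and $B_2$. I would design $V(H')$ so that each vertex carries both a $3$-coloring of an $s$-vertex independent set and the interface information needed to verify the constraints on the at most $O(ds)$ incident $\tilde{G}$-edges; the list $\cL(x_B)$ for each bucket restricts $x_B$ to vertices of $H'$ whose interface data matches $B$'s actual edge pattern, and two vertices of $H'$ are declared adjacent exactly when their interface data jointly describe a valid $\tilde{G}$-edge and their encoded colorings satisfy the corresponding inequality. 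A key structural tool here is the proper coloring of $\tilde{G}^2$ by $\tilde{c}$: it guarantees that the neighbors of any bucket have pairwise distinct $\tilde{c}$-colors, so that interface data can be indexed unambiguously by neighbor color.

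The main obstacle is the size bound $|V(H')|\le\gamma(d)^r$. A vertex of $H'$ must encode both a coloring (contributing $3^s$) and a non-trivial amount of interface data, and a naive encoding of the latter already violates $\gamma(d)^r=2^{O(r)}$. The key step is to exploit the bounded degree $d$, the bucket-size bound $s=O(r)$, and the distinctness of $\tilde{c}$-colors around each bucket to encode interfaces using only a constant amount of information per vertex of the bucket, so that $|V(H')|\le c(d)^s\le\gamma(d)^r$ for a suitable constant $\gamma(d)$ depending only on $d$. Once this is accomplished, equisatisfiability of the LIST-HOM instance with the original $3$-coloring problem follows directly from the construction.
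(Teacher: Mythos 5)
Your high-level plan matches the paper's: use Lemma~\ref{lem:group} to build $G'$ as the grouping, take one $G'$-vertex per bucket, put local color information on the vertices of $H'$, use the lists to tie a bucket to its compatible $H'$-vertices, and exploit the proper coloring of $\tilde G^2$ to disambiguate neighbors by their $\tilde c$-label. But the entire content of the lemma is the bound $|V(H')|\le\gamma(d)^r$, and your proposal stops at diagnosing the obstacle rather than resolving it: you state that a vertex of $H'$ "must encode both a coloring (contributing $3^s$) and a non-trivial amount of interface data" and that the key step is to compress the interface to "a constant amount of information per vertex of the bucket," but you never exhibit such an encoding, and the natural attempts fail. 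If the interface records, per bucket vertex, \emph{which} labels its $G$-neighbors lie in, that is roughly $L^{d}$ per position and $L^{ds}$ overall, which is $2^{\Theta(r\log r)}$, not $2^{O(r)}$; if it records only an incidence bit per label, the edge relation of $H'$ cannot be defined, since two $H'$-vertices carrying $(f_1,l_1)$ and $(f_2,l_2)$ give you no way to recover \emph{which positions} of $f_1,f_2$ must disagree.

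The paper's fix is a cleaner encoding that also removes two of your unnecessary complications. A vertex of $H'$ is a pair $(R,l)$ with $l\in[L]$ and $R\in\{0,1,2,3\}^L$, where $R[i]$ is the color of the unique vertex of the bucket whose $G$-neighbor lies in a bucket of label $i$ (or $0$ if no such vertex), so the interface is indexed by \emph{label} and stores a \emph{color}, not a position. The edge relation is then completely uniform: $(R_1,l_1)\sim(R_2,l_2)$ iff $R_1[l_2]\neq R_2[l_1]$, and no internal $3$-coloring $f$ needs to be stored on the $H'$-vertex at all --- the list constraint for $B$ simply requires that \emph{some} proper $3$-coloring $f$ of $B$ realizes $R$ (i.e.\ $\mu_B(f)=R$), so $f$ is existentially quantified away. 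This gives $|V(H')|\le 4^L\cdot L\le 5^{\lambda(d)r}$ directly and makes your proposed strengthening of Lemma~\ref{lem:group} (the bucket-size bound $s=O(r)$ via equitable coloring of the $F_i$) entirely unnecessary. So the missing idea is precisely that the interface vector $R\in\{0,1,2,3\}^L$ is \emph{all} that $H'$ needs to carry: index by label, record colors rather than positions, and push the internal coloring into the list.
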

\begin{proof}
{\em Constructing the graph $G'$.}
Let $G'$ be the grouping of $G$ and $c\colon V(G') \to [L]$ be the coloring provided by Lemma~\ref{lem:group} where~$L=\lambda(d) r$. To distinguish colorings of $G$ and $G'$, 
we call $c(B)$, for a bucket $B \in V(G')$, a~\emph{label} of~$B$.
Consider a bucket $B \in V(G')$, i.e., a subset of vertices of~$G$, and a label $i \in [L]$. From item 2 of Lemma~\ref{lem:group} we know that $c$ is a proper coloring of $(G')^2$. This, in particular, means that there is at most one $B' \in N_{G'}(B)$ such that $c(B')=i$. Moreover, if such $B'$ exists then, by item 4 of Lemma~\ref{lem:group}, there exists a unique $u \in B$ and unique $u' \in B'$ such that $\{u,u'\} \in E(G)$. This allows us to define the following mapping $\phi_B \colon [L] \to B \cup \{0\}$: $\phi_B(i)=u$ if such $B'$ exists and $\phi_B(i)=0$ if $B$ has no neighbor $B'$ of label~$i$. 

{\em Constructing the graph $H'$.}
We now define a redundant encoding of a $3$-coloring of a bucket $B \in V(G')$. Namely, let $\mu_B \colon (f \colon B \to \{1,2,3\}) \to \{0,1,2,3\}^L$.
That is, for a $3$-coloring $f \colon B \to \{1,2,3\}$ of $B$, $\mu_B$ is a vector $v$ of length~$L$. For $i \in [L]$, by $v[i]$ we denote the $i$-th component of~$v$. The value of $v[i]$ is defined as follows:
if $\phi_B(i)=0$ then $v[i]=0$, otherwise $v[i]=f(\phi_B(i))$.
In~other words, for a given bucket $B$ and a $3$-coloring $f$ of its vertices, for each possible label~$i \in [L]$, $\mu_B$
is the color of a vertex $u \in B$ that has a neighbor in a bucket with label $i$, and $0$ if there is no such vertex~$u$. 
  

We are now ready to construct the graph $H'$.
The set of vertices of $H'$ is defined as follows:
\[ V(H') =\{(R,l) \colon  R \in \{0,1,2,3\}^L \text{ and } l \in [L] \}\,,\] i.e., a~vertex of $H'$ is an encoding of a $3$-coloring of a bucket and a label of a bucket. 
A~bucket $B \in V(G')$ is allowed to be mapped to $(R,l) \in V(H')$ if and only if $l=c(B)$ and there is a $3$-coloring $f$ of $B$ such that $\mu_B(f)=R$.
Informally, two vertices in $V(H')$ are joined by an edge 
if they define two consistent $3$-colorings. Formally, 
$\{(R_1,l_1), (R_2, l_2)\} \in E(H')$ if and only if
$R_1[l_2] \neq R_2[l_1]$. Note that $|V(G')| \le n/r$ by Lemma~\ref{lem:group} and
$|V(H')| \le 4^L \cdot L \le 5^L=5^{\lambda(d)r}=\gamma(d)^r$
for $\gamma(d)=5^{\lambda(d)}$. 

{\em Running time of the reduction.} The reduction clearly takes polynomial time.

{\em Correctness of the reduction.} It remains to show that $G$ is $3$-colorable if and only if $(G',H')$ is a yes-instance of \LGH.

Assume that $G$ is $3$-colorable and take a proper $3$-coloring $g$ of~$G$. It defines a homomorphism from $G'$ to $H'$ in a natural way: $B \in V(G')$ is mapped to $(\mu_B(g), l(B))$.
Each list constraint is satisfied   by definition. To show that each edge is mapped to an edge, consider an edge $\{B,B'\} \in E(G')$. Then, by item 4 of Lemma~\ref{lem:group} there is a unique edge $\{u,u'\} \in E(G)$ such that $u \in B, u' \in B'$.
Note that $B$ and $B'$ are mapped to vertices $(R,l)$ and $(R',l')$ such that $R[l']=g(u)$ and $R'[l]=g(u')$. Since $g$
is a proper $3$-coloring of $G$, $g(u) \neq g(u')$. This, in turn, means that $\{(R,l), (R',l')\} \in E(H')$ and hence the edge $\{B,B'\}$ is mapped to this edge in~$H'$.

For the reverse direction, consider a homomorphism $h \colon G' \to H'$. For each bucket $B \in V(G')$, $h(B)$ defines a proper $3$-coloring of~$B$. Together, they define a $3$-coloring $g$ of $G$ and we need to show that $g$ is proper. Assume, to the contrary, that there is an edge $\{u,u'\} \in E(G)$ such that $g(u)=g(u')$. By item 3 of Lemma~\ref{lem:group}, $u$ and $u'$
belong to different buckets $B, B' \in V(G')$. By the definition of grouping, $\{B,B'\} \in E(G')$. Since $h$ is a homomorphism,
$\{(R,l), (R',l')\} := \{h(B),h(B')\} \in E(H')$. At the same time, $R[l']=g(u)=g(u')=R'[l]$ which contradicts to the fact that 
$\{(R,l), (R',l')\}$ is an edge in~$H'$.
\end{proof}

\section{Main Theorems}\label{sec:maintheorems}
\subsection{Graph Homomorphism}
For the proof of the first main theorem of this paper, we need 
the following lemma which is proved in ~\cite[Lemma~5]{FGKM2015}. 

\begin{lemma}
\label{lemma:lhomtohom}
There is a polynomial-time algorithm that from an instance $(G,H)$ of \LGH{} where $|V(G)|=n$, $|V(H)|=h\ge3$, constructs an   instance $(G',H')$ of \GH, where $|V(G')| \le n+s$ and  $|V(H')| \le s$ for $s 
<25h^2$, such that
there is a list homomorphism from $G$ to $H$ if and only if there is a homomorphism from $G'$ to $H'$. 

\end{lemma}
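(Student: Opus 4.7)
The plan is to simulate the list constraints via a single global ``pinning'' gadget of size $O(h^2)$ that is added to both $G$ and $H$. Concretely, $H'$ will consist of $V(H)$ together with auxiliary pinning vertices whose neighborhoods in $H'$ are pairwise distinct, and $G'$ will consist of $V(G)$ together with a single rigid twin copy of the pinning structure, plus additional edges from each $v\in V(G)$ to the twin encoding the list $\cL(v)$.

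For $H'$, I would add, for each $a\in V(H)$, a vertex $p_a$ whose image under any homomorphism $G'\to H'$ is forced to be~$a$, and for each pair $\{a,b\}\subseteq V(H)$ a ``pair vertex'' adjacent to $a$ and $b$ and to further anchors, together with a constant number of global asymmetry-breaking anchor vertices. The total count is at most $h+\binom{h}{2}+O(h)+O(1)\le 25h^2$ when $h\ge 3$. In $G'$, I would attach a rigid twin of the auxiliary part and, for each $v\in V(G)$ with list $\cL(v)$, add an edge from $v$ to the twin of $p_a$ for every $a\notin \cL(v)$ in such a way (exploiting non-adjacency in~$H'$) that $\varphi(v)$ cannot equal~$a$. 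Note that the twin is global: it uses at most the same $s$ vertices as the auxiliary part of $H'$, so $|V(G')|\le n+s$ as required.

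For correctness, in the forward direction a list homomorphism $\varphi\colon G\to H$ extends to $G'\to H'$ by sending each twin vertex to its canonical image in $H'$ and each $v$ to $\varphi(v)\in \cL(v)\subseteq V(H)\subseteq V(H')$; the added edges are satisfied because $\varphi$ respects the lists. In the reverse direction, the rigidity of the twin forces each pinning vertex in $G'$ to be mapped canonically, so the added edges incident to $v$ translate directly into forbidding each $a\notin \cL(v)$ from being the image of $v$, yielding a list homomorphism $G\to H$.

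The main obstacle is the rigidity argument: the twin of the auxiliary structure in $G'$ must have essentially a unique homomorphism onto the auxiliary part of $H'$, so that the pairing between pinning vertices and their canonical images cannot be disturbed. This is typically achieved by arranging pairwise distinct neighborhoods for all auxiliary vertices and adding a constant-size asymmetry-breaking substructure; the careful accounting of these contributions is what produces the explicit constant $25$ in the bound $s<25h^2$. Once rigidity is in place, both directions of the equivalence follow by routine extension and restriction, and the construction is clearly polynomial-time.
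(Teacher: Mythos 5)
The paper itself does not prove this lemma; it is imported verbatim from \cite[Lemma~5]{FGKM2015}, so there is no in-paper argument to compare against. Judged on its own, your sketch identifies the right \emph{shape} of a reduction (a pinning gadget attached to both sides, edges from $v$ to pins encoding $V(H)\setminus\cL(v)$, correctness via canonical extension/restriction), but it leaves the one genuinely hard step entirely unproved: rigidity. You assert that ``the rigidity of the twin forces each pinning vertex in $G'$ to be mapped canonically,'' and that this is ``typically achieved by arranging pairwise distinct neighborhoods \dots and adding a constant-size asymmetry-breaking substructure.'' Distinct neighborhoods prevent nontrivial \emph{automorphisms}, but this is the homomorphism setting: you must rule out \emph{all} non-canonical homomorphisms of the twin into $H'$, including folding maps, maps that collapse several pins onto one, and maps that send part of the twin into $V(H)$ itself. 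A graph with pairwise distinct neighborhoods can still have a huge endomorphism monoid, so the claim is not a routine bookkeeping matter; a concrete gadget (e.g.\ built around a large clique or odd cycles of distinct lengths with a proof that every homomorphism of the gadget into $H'$ is the canonical embedding) is exactly what the lemma requires, and it is absent.

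Two further gaps follow from the same omission. First, even granting that each twin pin maps to its canonical $p_a$, nothing in your construction prevents a vertex $v\in V(G)$ from being mapped to an \emph{auxiliary} vertex of $H'$ rather than into $V(H)$; you need an explicit ``type-separating'' pin adjacent (in $H'$) exactly to $V(H)$, with every $v$ joined to its twin, and you need to verify this pin is itself forced. Second, your ``pair vertices'' for every $\{a,b\}\subseteq V(H)$ are introduced without a stated role: they do not obviously contribute to rigidity, to list enforcement, or to the count, so it is unclear why $\binom{h}{2}$ of them are needed, and the arithmetic $h+\binom{h}{2}+O(h)+O(1)\le 25h^2$ is not an argument until the hidden constants are pinned down by an actual construction. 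In short, the proposal is a plausible outline of a reduction of this flavor, but the rigid gadget and the proof of its rigidity are the content of the lemma, and they are missing.
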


We are ready to prove our main theorem about graph homomorphisms.

\begin{theorem}\label{thm:main}
Let $G$ be an $n$-vertex graph $G$ and $H$ be an $h(n)$-vertex graph.
Unless ETH fails, for any constant $D\ge1$ there exists a constant $c=c(D)>0$ such that
for any function
$3\le h(n)\le n^D$, there is no time $\cO\left(h^{cn}\right)$ algorithm deciding whether there is a homomorphism from $G$ to $H$. 
\end{theorem}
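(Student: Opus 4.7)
The plan is to reduce 3-Coloring on $n_0$-vertex graphs of maximum degree $4$, which by Lemma~\ref{lemma:3col} requires time $2^{\Omega(n_0)}$ under ETH, to Graph Homomorphism by composing Lemma~\ref{lem:coltolhom} (to LIST-HOM) with Lemma~\ref{lemma:lhomtohom} (to HOM) and then padding with isolated vertices so that the resulting instance matches the size constraint $|V(H)|=h(|V(G)|)$ imposed by the hypothetical HOM algorithm. The grouping parameter $r$ in Lemma~\ref{lem:coltolhom} is chosen adaptively as a function of $h$ and $n_0$.

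Concretely, suppose for contradiction that an algorithm $A$ decides HOM$(G,H)$ in time $O(h(n)^{cn})$ for some function $h$ with $3\le h(n)\le n^D$ and some constant $c=c(D)>0$ to be fixed at the end. Given a 3-Coloring instance $G_0$ on $n_0$ vertices of maximum degree $4$ (WLOG without isolated vertices, since they can be removed without affecting 3-colorability), I would pick an integer $r$ in the range required by Lemma~\ref{lem:coltolhom} and apply it with $d=4$ to get an equisatisfiable LIST-HOM instance $(G_1,H_1)$ with $|V(G_1)|\le n_0/r$ and $|V(H_1)|\le \gamma^r$, where $\gamma := \gamma(4)$. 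Then Lemma~\ref{lemma:lhomtohom} gives an equisatisfiable HOM instance $(G_2,H_2)$ with $|V(G_2)|\le n_0/r + s$ and $|V(H_2)|\le s < 25\gamma^{2r}$. Padding both graphs with isolated vertices (which does not change the existence of a homomorphism when $G_0$ has no isolated vertex) produces an instance $(G_3,H_3)$ with $|V(G_3)|=N$ and $|V(H_3)|=h(N)$ for a suitable $N$ satisfying $N\ge|V(G_2)|$ and $h(N)\ge|V(H_2)|$. Running $A$ on $(G_3,H_3)$ decides 3-Colorability of $G_0$ in time $O(h(N)^{cN})=2^{cN\log h(N)+O(\log n_0)}$.

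The heart of the argument is the adaptive choice of $r$ and $N$. When $h$ grows fast enough, the balanced choice $r\approx\log h(N)/(2\log\gamma)$ makes $\gamma^{2r}$ comparable to $n_0/r$, yielding $N=\Theta(n_0/\log h(N))$ and $N\log h(N)=\Theta(n_0)$. Consequently $A$ runs in time $2^{\Theta(cn_0)}$, and taking $c=c(D)>0$ smaller than a threshold depending only on $D$ (through $\gamma(4)$ and the ETH constant $q$) makes this strictly less than $2^{qn_0}$, contradicting Lemma~\ref{lemma:3col}. For $h$ bounded above by a constant too small for the padding condition $h(N)\ge|V(H_2)|$ to be satisfiable by any valid $r$, I would use a direct reduction instead: apply $A$ to $(G_0,H)$ where $H$ is the disjoint union of $K_3$ and $h(n_0)-3$ isolated vertices; this simulates 3-Coloring in time $2^{cn_0\log h(n_0)}$, contradicting Lemma~\ref{lemma:3col} whenever $c<q/\log h(n_0)$, again a positive constant depending only on $D$ because in this regime $h$ is bounded by a constant determined by $\gamma(4)$. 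The main obstacle will be verifying the preconditions of Lemmas~\ref{lem:coltolhom} and~\ref{lemma:lhomtohom} for the adaptively chosen $r$, and distilling a single explicit constant $c(D)>0$ that handles both the balanced reduction and the direct reduction uniformly across all $h$ with $3\le h(n)\le n^D$.
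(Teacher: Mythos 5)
Your plan — reduce $3$-Coloring on degree-$4$ graphs to \LGH{} via Lemma~\ref{lem:coltolhom}, then to \GH{} via Lemma~\ref{lemma:lhomtohom}, choose the grouping parameter $r$ on the scale $\log h/\log\gamma$, and handle the bounded-$h$ regime with a separate direct reduction — is exactly the route the paper takes. One concrete imprecision to flag: with your stated choice $r\approx\log h(N)/(2\log\gamma)$ you get $\gamma^{2r}\approx h(N)$, so Lemma~\ref{lemma:lhomtohom} yields $|V(H_2)|<25\gamma^{2r}\approx 25\,h(N)$, which \emph{exceeds} the budget $h(N)$, and the padding step $h(N)\ge|V(H_2)|$ cannot be met; the intuition that $\gamma^{2r}$ should be comparable to $n_0/r$ also fails for $h$ close to $N^D$ with $D>1$. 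The paper fixes this by taking $r=\log h/(4D\log\gamma)$, so that $25\gamma^{2r}=25\,h^{1/(2D)}\le h$ (using $h\ge 25^2$ and $D\ge1$) and $25\gamma^{2r}\le 25\sqrt{n}$ is dwarfed by $n_0/r$ rather than comparable to it; with this choice the running time becomes $\cO(2^{8cD\log\gamma\cdot n'})$ and $c=q/(8D\log\gamma)$ closes the contradiction. You do acknowledge that the constant and the precondition checks still need to be distilled, and that is precisely where this factor of $2D$ enters; with that adjustment your argument matches the paper's. The explicit isolated-vertex padding you describe is a minor bookkeeping addition the paper leaves implicit.
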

\begin{proof}
The outline of the proof of the theorem is as follows. Assuming that there is a ``fast" algorithm for \GH, we show that there is  also a  ``fast"  algorithm solving \LGH, which, in turn, implies  ``fast" algorithm for \textsc{3-Coloring} on degree $4$ graphs, contradicting ETH. In what follows, we specify what we mean by ``fast".

Let $h_0=25^2$. If $h(n)<h_0$ for all values of $n$, then an algorithm with running time $\cO\left(h^{cn}\right)$ would solve \textsc{3-Coloring} in time $\cO\left(h_0^{cn}\right)=\cO\left(2^{cn\log{h_0}}\right)$ (recall that $h(n) \ge 3$). Therefore, by choosing a small enough constant $c$ such that $c\log{h_0} < q$, we arrive to a contradiction with Lemma~\ref{lemma:3col}.

From now on we assume that $h(n)\ge h_0$ for large enough values of $n$.
Let $c=\frac{q}{8D\log{\gamma}}$, where $q$ is the constant from Lemma~\ref{lemma:3col}, and $\gamma:=\gamma(4)$ is the constant from Lemma~\ref{lem:coltolhom}. For 
  the sake of contradiction, let us assume that there exists an algorithm $\cal A$ deciding whether  $G\to H$ in time $\cO(h^{cn})=\cO(2^{cn\log{h}})$, where $|V(G)|=n, |V(H)|=h:=h(n)$. Now we show how to solve $3$-coloring on $n'$-vertex graphs of degree $4$ in time $2^{q n'}$, which would contradict  Lemma~\ref{lemma:3col}.
 
Let $r=\frac{\log{h}}{4D\log{\gamma}}$ and $n'=\frac{nr}{2}$. Let $G'$ be an $n'$-vertex graph of maximum degree four that needs to be $3$-colored. Using Lemma~\ref{lem:coltolhom} we construct an instance $(G_1,H_1)$ of \LGH{} that is satisfiable if and only if the initial graph $G'$ is $3$-colorable, and $|V(G_1)|\le\frac{n'}{r}, |V(H_1)|\le\gamma^r$. By Lemma~\ref{lemma:lhomtohom}, this instance is equisatisfiable to an instance $(G,H)$ of \GH{} where $|V(H)|<25\gamma^{2r}\leq 25h^{\frac{1}{2D}}\le h$ (since $D\ge1$ and $h(n)\ge h_0$), and 
\[|V(G)|< \frac{n'}{r}+25\gamma^{2r} \le \frac{n}{2}+25h^{\frac{1}{2D}}\le \frac{n}{2}+25\sqrt{n}\le n  \]
(for sufficiently large values of~$n$).

Now, in order to solve $3$-coloring for $G'$, we construct an instance $(G,H)$ with $|V(G)|\le n$ and $|V(H)|\le h$ of 
\GH \,  and 
invoke the algorithm $\cal A$ on this instance.
 The running time of $\cal A$ is
\[ \cO( 2^{cn\log{h}}) =\cO(2^{\frac{2cn'}{r}\log{h}} )= \cO(2^{2cn'\log{h}\cdot\frac{4D\log{\gamma}}{\log{h}}} )= \cO(2^{8cDn'\log{\gamma}} )= \cO(2^{q n'}) \, \]
and hence we can find a 3-coloring of $G'$ in time $\cO(2^{q n'})$, which contradicts ETH. 
\end{proof}

\subsection{Subgraph Isomorphism}\label{sec:GI}

To prove the theorem about  subgraph isomorphisms, we need
the following theorem of   Cygan,   Pachocki,   and Soca{\l}a  (Theorem~1.3 in~\cite{CPS2015}).
\begin{theorem}\label{thm:homtoiso}
Given an instance $(G, H)$ of \GH{} one can in $\operatorname{poly}(n)2^n$ time create $2^n$ instances of \SI{} with $n$ vertices, where $n = |V(G)|+|V(H)|$, such that $(G, H)$ is a yes-instance if and only if at least one of the created instances of \SI{}  is a yes-instance.
\end{theorem}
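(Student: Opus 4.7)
My plan is to reduce \GH{} to \SI{} by enumerating the ``multiplicity pattern'' of a hypothetical homomorphism, namely how many vertices of $G$ land on each vertex of $H$. Any homomorphism $\varphi\colon V(G)\to V(H)$ determines a vector $m\in\mathbb{Z}_{\ge 0}^{V(H)}$ given by $m_v=|\varphi^{-1}(v)|$ and summing to $|V(G)|$. The number of such vectors is $\binom{|V(G)|+|V(H)|-1}{|V(H)|-1}\le 2^{n-1}<2^n$ by stars-and-bars, and I would enumerate them (e.g., via binary strings of length $n-1$ encoding positions of the $|V(H)|-1$ separators) in $\operatorname{poly}(n)\,2^n$ total time.

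For each multiplicity vector $m$, I would build the \emph{blow-up} graph $H[m]$: replace every $v\in V(H)$ by an independent set of $m_v$ ``copies'', and make a copy of $v$ and a copy of $w$ adjacent iff $v\neq w$ and $\{v,w\}\in E(H)$. Then $|V(H[m])|=|V(G)|$, so the associated \SI{} instance ``does $H[m]$ contain $G$ as a subgraph?'' has $O(n)$ vertices and is built in polynomial time. This produces at most $2^n$ \SI{} instances.

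The correctness statement I would then establish is: $G\to H$ iff for some such $m$, the graph $G$ is a (spanning) subgraph of $H[m]$. For the forward direction, given $\varphi$ with preimage multiplicities $m$, fix any bijection between $\varphi^{-1}(v)$ and the $m_v$ copies of $v$ in $H[m]$ to obtain an injection $\tilde\varphi\colon V(G)\hookrightarrow V(H[m])$; for any $\{u,u'\}\in E(G)$ we have $\varphi(u)\neq\varphi(u')$ since $H$ is loopless, and then $\{\tilde\varphi(u),\tilde\varphi(u')\}\in E(H[m])$ by definition of the blow-up. For the converse, a subgraph embedding $\tilde\varphi\colon V(G)\hookrightarrow V(H[m])$ composed with the projection ``copy $\mapsto$ original'' yields a map $\varphi\colon V(G)\to V(H)$; adjacent vertices of $G$ map to adjacent vertices of $H[m]$, which are necessarily copies of \emph{distinct} originals (copies of one vertex being independent), and those originals are adjacent in $H$ by construction, so $\varphi$ is a homomorphism.

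The main (and essentially only) obstacle is choosing the right enumeration object. Once one observes that the vector of preimage sizes, together with the \SI{} test on the blow-up, captures the existence of a homomorphism exactly, the $2^n$ bound falls out of a direct stars-and-bars count and the remaining verifications are mechanical.
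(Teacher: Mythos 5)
Your reduction is correct and, to my knowledge, is exactly the multiplicity-vector/independent-set-blow-up argument of the cited Cygan--Pachocki--Soca{\l}a paper~\cite{CPS2015}; the present paper only quotes the result without reproducing a proof, so there is nothing in-text to compare against. One cosmetic mismatch worth flagging: each of your \SI{} instances $(G, H[m])$ has $2|V(G)|$ vertices, which can exceed $n = |V(G)|+|V(H)|$ when $|V(G)|>|V(H)|$; padding both graphs with isolated vertices to $n$ each fixes this (any isolated vertex of $G$ that an embedding sends to a padding vertex of $H[m]$ can be rerouted to an unused blow-up vertex, since the blow-up has exactly $|V(G)|$ non-padding vertices), and in any event the $O(n)$ bound you state already suffices for the application in Theorem~\ref{thmSI}.
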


 Combining Theorem~\ref{thm:main} with Theorem~\ref{thm:homtoiso}, we immediately obtain the following theorem.
\begin{theorem}\label{thmSI}
Unless ETH fails, there exists a constant $c>0$ such that there is no algorithm deciding whether a given $n$-vertex graph $G$ contains a subgraph isomorphic to a given $n$-vertex graph $H$ in time 
$\cO\left(n^{cn}\right)$.
\end{theorem}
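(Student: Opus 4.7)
The plan is to contrapose: assume for contradiction that for every constant $c>0$ there is an algorithm solving \SI{} on two $N$-vertex graphs in time $\cO(N^{cN})$, and use Theorem~\ref{thm:homtoiso} to transport such an algorithm to \GH{}, contradicting Theorem~\ref{thm:main}. Fix the constant $D=1$ in Theorem~\ref{thm:main} and let $c_1=c(1)>0$ be the resulting constant, so that (assuming ETH) no algorithm decides \GH{} in time $\cO(h^{c_1 n})$ when $|V(G)|=n$ and $|V(H)|=h\le n$. We will exhibit a constant $c>0$ such that an $\cO(N^{cN})$-time algorithm for \SI{} yields a $\cO(h^{c_1 n})$-time algorithm for \GH{} with the parameters above.

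Given an instance $(G,H)$ of \GH{} with $|V(G)|=n$ and $|V(H)|=h\le n$, invoke Theorem~\ref{thm:homtoiso} with parameter $N:=n+h\le 2n$. This produces, in time $\operatorname{poly}(N)\,2^N$, a family of $2^N$ instances of \SI{} on $N$-vertex graphs, such that $(G,H)$ is a yes-instance of \GH{} iff at least one of these is a yes-instance of \SI. Running the hypothetical \SI{} algorithm on each of them and returning the disjunction solves \GH{} in total time
\[
\operatorname{poly}(N)\,2^N \;+\; 2^N\cdot \cO\!\left(N^{cN}\right) \;=\; \cO\!\left(2^{N}\cdot N^{cN}\right) \;=\; \cO\!\left(2^{N(1+c\log N)}\right).
\]
Substituting $N\le 2n$ and $h\ge 3$ (hence $\log h\ge 1$), this is bounded by $\cO\!\left(2^{2n(1+c\log(2n))}\right)=\cO\!\left(2^{(2c+o(1))\,n\log n}\right)$ as $n\to\infty$.

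To obtain the desired contradiction with Theorem~\ref{thm:main} applied to the function $h(n)=n$ (which is a legal choice since $h(n)\le n^{D}$ with $D=1$ and $h(n)\ge 3$ for $n\ge 3$), it suffices to guarantee
\[
(2c+o(1))\,n\log n \;<\; c_1\, n\log h \;=\; c_1\, n\log n.
\]
Choosing $c:=c_1/3$ makes this inequality hold for all sufficiently large $n$, so the \GH{} algorithm we have built would decide instances with $|V(G)|=n$ and $|V(H)|=n$ in time $\cO(h^{c_1 n})$, contradicting Theorem~\ref{thm:main}. Hence no $\cO(n^{cn})$ algorithm for \SI{} exists under ETH.

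The proof is essentially a constant-chase, so there is no genuine obstacle: the only thing to be careful about is that the overhead $2^N$ from Theorem~\ref{thm:homtoiso} is additive in the exponent ($2^N = 2^{O(n)}$), whereas the \GH{} lower bound has a $\Theta(n\log h)$ exponent, so this $2^{O(n)}$ factor is absorbed as long as $h(n)\to\infty$. Choosing $h(n)=n$ in Theorem~\ref{thm:main} ensures $\log h(n)=\log n\to\infty$, which is precisely what makes the calculation go through.
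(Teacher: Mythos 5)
Your proof is correct and fills in exactly the constant-chasing calculation that the paper compresses into the single line ``Combining Theorem~\ref{thm:main} with Theorem~\ref{thm:homtoiso}, we immediately obtain the following theorem.'' Instantiating $D=1$, $h(n)=n$, and $c=c_1/3$, and observing that the $2^{O(n)}$ overhead of the reduction is dominated by $h^{c_1 n}=2^{c_1 n\log n}$ once $\log h\to\infty$, is precisely the intended argument.
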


\bibliographystyle{plain}
\bibliography{hom}

\end{document}